\DeclareDocumentCommand\zerovec{o}{\IfNoValueTF{#1}{\mathbb{O}}{\mathbb{O}_{#1}}}
\DeclareDocumentCommand\unitvec{m}{\mathbbm{e}^{#1}}
\DeclareDocumentCommand\R{}{\mathbb{R}}
\DeclareDocumentCommand\transpose{m}{#1^{\intercal}}
\DeclareDocumentCommand\rank{o}{\operatorname{rank}\IfValueTF{#1}{\left(#1\right)}{}}
\DeclareDocumentCommand\orderO{o}{\mathcal{O}\IfValueTF{#1}{\left(#1\right)}{}}
\DeclareDocumentCommand\orderTheta{o}{\Theta\IfValueTF{#1}{\left(#1\right)}}
\DeclareDocumentCommand\orderOmega{o}{\Omega\IfValueTF{#1}{\left(#1\right)}{}}
\DeclareDocumentCommand\reviewComment{mm}{\todo[inline,color=red!50!white]{\textbf{Reviewer:} #1 \textbf{Response:} #2}}
\DeclareDocumentCommand\reviewComment{mm}{}
\newtheorem{theorem}{Theorem}
\newtheorem{proposition}[theorem]{Proposition}
\newtheorem{corollary}[theorem]{Corollary}
\title{The Binary Linearization Complexity of Pseudo-Boolean Functions}
\author{Matthias Walter\footnote{Department of Applied Mathematics, University of Twente, The Netherlands. E-mail: \texttt{m.walter@utwente.nl}}}
\begin{document}

\maketitle

\begin{abstract}
  We consider the problem of linearizing a pseudo-Boolean function $f : \{0,1\}^n \to \R$ by means of $k$ Boolean functions.
  Such a linearization yields an integer linear programming formulation with only $k$ auxiliary variables.
  This motivates the definition of the linearization complexity of $f$ as the minimum such $k$.
  Our theoretical contributions are the proof that random polynomials almost surely have a high linearization complexity and characterizations of its value in case we do or do not restrict the set of admissible Boolean functions.
  The practical relevance is shown by devising and evaluating integer linear programming models of two such linearizations for the low auto-correlation binary sequences problem.
  Still, many problems around this new concept remain open.
\end{abstract}

\section{Introduction}
\label{sec_intro}

\DeclareDocumentCommand\lc{m}{\mathop{lc}_{#1}}
\DeclareDocumentCommand\G{}{\mathcal{G}}
\DeclareDocumentCommand\B{}{\mathcal{B}}
\DeclareDocumentCommand\C{}{\mathcal{C}}
\DeclareDocumentCommand\M{}{\mathcal{M}}

We consider \emph{pseudo-Boolean functions}, i.e., $f : \{0,1\}^n \to \R$.
Such functions arise at the intersection of \emph{Constraint Programming} and \emph{Operations Research}.
The corresponding \emph{pseudo-Boolean optimization problem} of minimizing $f$ over $\{0,1\}^n$ is known to be NP-hard since it subsumes the maximum cut problem~\cite{BorosH02,Karp72}.
Besides being subject to minimization, pseudo-Boolean functions appear frequently in problem constraints, e.g., in satisfiability problems.
It is known well that every pseudo-Boolean function has a unique representation by means of a multilinear polynomial.
Above-mentioned optimization problems can be reduced to integer \emph{linear} programming problems by introducing auxiliary variables, e.g., for every monomial of its multilinear representation.
Since there may be a huge number of monomials, the natural question for alternative ways of linearizing using fewer auxiliary variables arises.
The main research question addressed by this paper is that for the minimum number of such auxiliary variables under the additional restriction that these are binary variables.

To this end, a \emph{linearization} of $f$ is defined by functions $g_1,g_2,\dotsc,g_k : \{0,1\}^n \to \R$ and parameters $a \in \R^n$, $\beta \in \R$ and $b_i \in \R$ for $i=1,2,\dotsc,k$ such that
\begin{equation}
  f(x) = \transpose{a}x + \beta + \sum_{i=1}^k b_i g_i(x) \label{eq_linearization}
\end{equation}
holds for all $x \in \{0,1\}^n$.
Its \emph{size} is the number $k$ of such functions.
The \emph{linearization complexity of $f$ with respect to a family $\G$} of functions is defined as the minimum size of a linearization with $g_i \in \G$ for all $i$, and is denoted by $\lc{\G}(f)$.

In this paper we focus on \emph{binary} linearizations, which are those where each function $g \in \G$ is Boolean, i.e., $g : \{0,1\}^n \to \{0,1\}$ holds.
For a binary linearization of size $k$ there exists an integer linear programming (IP) formulation for modeling the relationship of $x$ and $f(x)$ with $n+k$ variables which works as follows.
In addition to the variables $x \in \{0,1\}^n$ we introduce the binary variables $y \in \{0,1\}^k$ that shall represent the values $g_i(x)$.
By substituting $g_i(x)$ with $y_{i}$, the function value $f(x)$ is expressed as an affine function in the variables $(x,y)$.
The following constraints ensure that $y_i = g_i(x)$ holds:
\begin{subequations}
  \label{eq_no_good}
  \begin{alignat}{7}
    \sum_{j : \bar{x}_j = 0} x_j &\;+\;& \sum_{j : \bar{x}_j = 1} (1-x_j) &\;+\;& y_i &\geq 1 &\quad& \forall \bar{x} \in \{0,1\}^n : g_i(\bar{x}) = 1, \\
    \sum_{j : \bar{x}_j = 0} x_j &\;+\;& \sum_{j : \bar{x}_j = 1} (1-x_j) &\;+\;& (1-y_i) &\geq 1 &\quad& \forall \bar{x} \in \{0,1\}^n : g_i(\bar{x}) = 0.
  \end{alignat}  
\end{subequations}
Note that~\eqref{eq_no_good} consists of $2^n$ inequalities.
The associated separation problem can be solved with linearly many evaluations of $g_i$ plus linear computation time by computing, for given $\hat{x} \in [0,1]^n$ the closest vertex $\bar{x}^0 \in \{0,1\}^n$ of the cube and all vectors $\bar{x}^j \in \{0,1\}^n$ that can be obtained from $\bar{x}^0$ by flipping coordinate $j$, computing $g_i(\bar{x}^j)$ and testing~\eqref{eq_no_good} for $j=0,1,2,\dotsc,n$.
However, it is worth noting that the LP bounds obtained from such a formulation are typically very weak.
A special case in which~\eqref{eq_no_good} yields a perfect formulation is that of a parity indicator variable, that is, if $g_1(x) = 1$ holds if and only if $\sum_{i=1}^n x_i$ is odd~\cite{Jeroslow75}.
Here, perfect means that the convex hull of $\{ (x,g_1(x)) : x \in \{0,1\}^n \}$ is described by~\eqref{eq_no_good} and $0 \leq x_i \leq 1$ for $i=1,2,\dotsc,n$.
Such variables have applications in integer programming approaches for decoding of binary codes~\cite{ZhangS11}.
For now, the main purpose of~\eqref{eq_no_good} is to guarantee the \emph{existence} of integer programming formulations.
Once a more specific family of functions is identified to yield small linearizations for a certain application, improved integer programming relaxations can be developed for this family.
We will later see an example for this proposed approach.
We denote by $\B$ the set of all Boolean functions.
An interesting constrained subclass is the family $\C \subseteq \B$ of functions of the form
\begin{equation}
  g_{I,J}(x) \coloneqq \prod_{i \in I} x_i \cdot \prod_{j \in J} (1-x_j), \label{eq_complement_products}
\end{equation}
i.e., products of potentially complemented variables.
Even more restricted is the family $\M \subseteq \C$ of \emph{monomials}, i.e., functions $g_{I,\varnothing}$ for all $I$.

There often exist smaller and stronger formulations than~\eqref{eq_no_good}, e.g., if $g_i(x)$ does not depend on all $x$-variables.
For instance, for each function $g_{I,J} \in \C$, a perfect IP formulation is known, which is due to Fortet~\cite{Fortet60a,Fortet60b}, namely
\begin{subequations}
  \label{eq_complement_linearization}
  \begin{alignat}{7}
    y_{I,J} &\leq x_i &\quad& \forall i \in I, \\ 
    y_{I,J} &\leq 1-x_j &\quad& \forall j \in J, \\ 
    1-y_{I,J} &\leq \sum_{i \in I} (1-x_i) + \sum_{j \in J} x_j, \\
    y_{I,J} &\geq 0.
  \end{alignat}
\end{subequations}

On the one hand, the interplay of multiple simultaneous linearizations has been investigated by many researchers.
This includes mainly results about the quadratic case, e.g., for products of binary and continuous variables or products of binary variables and linear combinations of binary variables~\cite{Glover75}.
Worth mentioning is also the relaxation-linearization technique due to Adams and Sherali~\cite{AdamsS86,AdamsS90,AdamsS93}, which was also applied to polynomial optimization~\cite{SheraliT92}.
The strength of such alternative formulations for quadratic polynomials is compared theoretically and in practice in~\cite{FuriniT13}.
However, we are not aware of any research about their size.

On the other hand, the minimum number of additional \emph{quadratization variables} was investigated by Anthony et al.~\cite{AnthonyBCG17}.
These are variables $y \in \{0,1\}^k$ such that for all $x \in \{0,1\}^n$
\begin{equation}
  f(x) = \min \{ g(x,y) : y \in \{0,1\}^k \} \label{eq_quadratization}
\end{equation}
holds for a quadratic polynomial $g : \{0,1\}^{n + k} \to \R$.
This approach is not directly related to the linearization complexity, but similar properties of pseudo-Boolean functions are exploited.
Besides establishing first results on the linearization complexity with respect to different function families we showcase the use of a new alternative linearization technique for an application.
Most importantly, we present open problems and interesting research questions to stimulate further research in this direction.
Before presenting the outline of the paper, the potential of using various linearizations is exemplified.

\paragraph{Example.}
Consider $f(x) = x_1x_2 + x_1x_3 + x_2x_3 -x_1x_2x_3$.
Since it is the sum of four binary (non-affine) terms it has $\lc{\C}(f) \leq \lc{\M} = 4$.
However, it turns out that $\lc{\C}(f) \leq 1$ holds:
consider $g_1(x) = (1-x_1) (1-x_2) (1-x_3)$ and observe that $f(x) = x_1 + x_2 + x_3 - 1 + g_1(x)$, which yields $\lc{\C}(f) \leq 1$.
It is also not hard to see that $\lc{\C}(f) \geq 1$ holds (see \cref{thm_zero_affine}), which establishes $\lc{\C}(f) = 1$.

Clearly, this example can be extended to an arbitrary degree, showing that $\lc{\M}(f)$ can be exponentially larger (in $n$) than $\lc{\C}(f)$.

\paragraph{Outline.}
The paper is structured as follows.
In \cref{sec_basic_results} we present theoretical results about the linearization complexity for various families $\G$ of Boolean functions.
\Cref{sec_auto_correlation} is about the low auto-correlation binary sequences problem, which is an optimization problem that arises in theoretical physics, and for which we present a new linearization and evaluate it computationally.
We conclude the paper with several open problems in \cref{sec_open_problems} and hope that some of them will be addressed by researchers in the future.

\section{Basic results}
\label{sec_basic_results}

\paragraph{Nonlinear part of a function.}
We first introduce an auxiliary function with the purpose of removing the linear part of a given function.
For a function $f : \{0,1\}^n \to \R$ we denote by $\widetilde{f}$ the function defined by
\[
  \widetilde{f}(x) \coloneqq f(x) - f(\zerovec) - \sum_{i=1}^n \big( f(\unitvec{i}) - f(\zerovec) \big) x_i,
\]
which we call \emph{the nonlinear part of $f$}.
Here, $\zerovec$ and $\unitvec{i}$ denote the zero vector and the $i$-th standard unit vector, respectively.
The following proposition makes clear why the nonlinear portion is useful.

\begin{proposition}
  \label{thm_nonlinear_part}
  Let $f : \in \{0,1\}^n \to \R$.
  Then its nonlinear part satisfies the following properties.
  \begin{enumerate}
  \item
    $\widetilde{f}(\zerovec) = 0$ and $\widetilde{f}(\unitvec{i}) = 0$ holds for $i=1,2,\dotsc,n$.
  \item
    $\lc{\G}(\widetilde{f}) = \lc{\G}(f)$ holds for any family $\G$ of functions.
  \end{enumerate}
\end{proposition}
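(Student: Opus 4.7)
For part~(1) I would just plug in. Evaluating at $\zerovec$ kills the linear sum and leaves $f(\zerovec)-f(\zerovec)=0$. Evaluating at $\unitvec{i}$ picks out only the $i$-th coefficient, giving $f(\unitvec{i})-f(\zerovec)-(f(\unitvec{i})-f(\zerovec))=0$. So this part is a one-line computation and there is no real obstacle.

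For part~(2) the key observation is that $\widetilde{f}$ and $f$ differ by an affine function. Setting $a_i^{\star}\coloneqq f(\unitvec{i})-f(\zerovec)$ and $\beta^{\star}\coloneqq f(\zerovec)$, by definition
\[
  f(x)=\widetilde{f}(x)+\transpose{(a^{\star})}x+\beta^{\star}\quad\text{for all }x\in\{0,1\}^n.
\]
The plan is to show both inequalities $\lc{\G}(\widetilde{f})\le \lc{\G}(f)$ and $\lc{\G}(f)\le \lc{\G}(\widetilde{f})$ by translating linearizations back and forth using this affine difference.

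For the first inequality, starting from a linearization $f(x)=\transpose{a}x+\beta+\sum_{i=1}^{k}b_i g_i(x)$ with $g_i\in\G$, subtract the affine part $\transpose{(a^{\star})}x+\beta^{\star}$ from both sides. This yields
\[
  \widetilde{f}(x)=\transpose{(a-a^{\star})}x+(\beta-\beta^{\star})+\sum_{i=1}^{k}b_i g_i(x),
\]
which is a linearization of $\widetilde{f}$ of the same size $k$ using the same functions $g_i$. The reverse inequality is completely symmetric: add $\transpose{(a^{\star})}x+\beta^{\star}$ to any linearization of $\widetilde{f}$ to obtain one of $f$ of the same size. Since both directions preserve membership in $\G$, we get equality of the two complexities.

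Neither step is really hard; the only subtlety worth flagging is that the definition of a linearization allows arbitrary affine terms $\transpose{a}x+\beta$, so absorbing the affine difference never forces a change in the Boolean functions $g_i$. This is what makes $\G$-membership carry over unchanged and the two complexities agree for every family $\G$.
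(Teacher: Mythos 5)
Your proposal is correct and follows essentially the same route as the paper: verify part (1) by direct substitution, and for part (2) observe that $f$ and $\widetilde{f}$ differ by an explicit affine function, so linearizations transfer in both directions with the same functions $g_i$ and the same size. The paper states this more tersely (it just records the affine difference and says the claim ``follows readily''), whereas you spell out the two inequalities, but the underlying argument is identical.
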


\begin{proof}
  To verify the first property it suffices to plug in the zero vector and the unit vectors.
  Now observe that
  $\widetilde{f}(x) = \transpose{a}x + \beta + f(x)$ where $a_i = f(\zerovec) - f(\unitvec{i})$ and $\beta = - f(\zerovec)$ holds.
  With this in mind, the second property follows readily from the definition of linearization complexity.
\end{proof}

The result implies that we only need to analyze linearization complexities of functions $f$ with $f(\zerovec) = f(\unitvec{i}) = 0$ for $i=1,2,\dotsc,n$.

\paragraph{Linearization complexity.}
We continue with simple properties of the linearization complexity.
Our first observation is that for arbitrary $\G \subseteq \B$, $\lc{\G}(f) > 0$ indicates that $f$ is actually nonlinear over $\{0,1\}^n$.
\begin{proposition}
  \label{thm_zero_affine}
  For $\G \subseteq \B$, a function $f$ has $\lc{\G}(f) = 0$ if and only if $f$ is affine.
\end{proposition}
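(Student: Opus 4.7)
The claim is essentially a direct unpacking of the definition of $\lc{\G}(f)$, so my plan is to handle the two implications separately and keep the argument to a few lines.

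For the $(\Leftarrow)$ direction, suppose $f$ is affine, meaning $f(x) = \transpose{a} x + \beta$ for some $a \in \R^n$ and $\beta \in \R$ (on all of $\{0,1\}^n$). Then taking $k = 0$ in \eqref{eq_linearization} yields a valid linearization (the empty sum is zero), so $\lc{\G}(f) \leq 0$. Since $\lc{\G}(f)$ is a nonnegative integer by definition, this forces equality.

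For the $(\Rightarrow)$ direction, suppose $\lc{\G}(f) = 0$. Then by definition there exists a linearization of size $0$, i.e.\ the sum in \eqref{eq_linearization} is empty and $f(x) = \transpose{a} x + \beta$ holds on $\{0,1\}^n$, which is exactly the statement that $f$ is affine.

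Note that the assumption $\G \subseteq \B$ plays no role here because no $g_i$ ever appears in the $k=0$ case; the proposition is really just recording that the size-$0$ linearizations are exactly the affine functions. There is no substantive obstacle in the argument, and it can be merged into a single short paragraph in the actual writeup.
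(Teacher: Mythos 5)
Your proof is correct and takes the same route the paper (implicitly) does: the paper states \cref{thm_zero_affine} without proof precisely because, as you observe, a size-$0$ linearization in~\eqref{eq_linearization} is by definition exactly an affine representation of $f$, so both implications are immediate. Your side remark that the hypothesis $\G \subseteq \B$ is never used is also accurate.
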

We can rephrase the result in terms of the nonlinear part of $f$ as follows. It holds $\lc{\G}(\widetilde{f}) = 0$ if and only if $\widetilde{f}$ is the zero function.

The second property is the monotonicity of $\lc{\G}(f)$ with respect to the family $\G$, which also follows from the definition.
\begin{proposition}
  \label{thm_monotone}
  Let $\G' \subseteq \G$ and consider $f : \{0,1\}^n \to \R$.
  Then $\lc{\G'}(f) \geq \lc{\G}(f)$.
\end{proposition}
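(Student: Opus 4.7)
The plan is to argue directly from the definition of $\lc{\G}(f)$ as a minimum over a set of admissible linearizations, using the fact that enlarging the family can only decrease (or preserve) this minimum.

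First I would take an optimal linearization witnessing $\lc{\G'}(f)$, that is, functions $g_1,g_2,\dotsc,g_k \in \G'$ with $k = \lc{\G'}(f)$ together with parameters $a \in \R^n$, $\beta \in \R$, and $b \in \R^k$ satisfying the identity~\eqref{eq_linearization} for all $x \in \{0,1\}^n$. The key observation is then simply that $\G' \subseteq \G$ implies $g_i \in \G$ for every $i$, so the same tuple $(g_1,\dotsc,g_k, a, \beta, b)$ is also a valid linearization of $f$ with functions drawn from $\G$. Hence $\lc{\G}(f) \leq k = \lc{\G'}(f)$.

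The only edge case worth mentioning is when no linearization of $f$ with functions from $\G'$ exists, in which case we would interpret $\lc{\G'}(f) = +\infty$ and the inequality is vacuous; however, since $\B$ itself always admits a linearization (e.g., via one indicator per nonzero value of $\widetilde f$), this case does not arise for any $\G'$ under consideration here.

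The main obstacle is essentially nonexistent: this is a textbook monotonicity of a minimum over a nested family of admissible solutions, and the proof in the paper is likely a single sentence invoking the definition.
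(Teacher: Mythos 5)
Your proof is correct and matches the paper's approach: the paper gives no explicit proof for this proposition, stating only that it ``follows from the definition,'' and your argument (an optimal $\G'$-linearization is already a valid $\G$-linearization, so the minimum over the larger family can only be smaller or equal) is exactly that definitional observation spelled out. The edge-case remark about $+\infty$ is a reasonable bonus and does not affect correctness.
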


The third property is a characterization the linearization complexity for $\G = \M$ by means of a polynomial representation of $f$.
In fact, the following result follows readily from the well-known fact that every pseudo-Boolean function has a unique multilinear polynomial representation~\cite{HammerRR63,HammerR12} (see also Proposition~2 in \cite{BorosH02}).
It also yields finiteness of the linearization complexity for all $\G \supseteq \M$.

\begin{proposition}
  \label{thm_lc_monomials}
  Let $f : \{0,1\}^n \to \R$.
  Then $\lc{\M}(f)$ is equal to the number of monomials (with a nonzero coefficient) of degree at least $2$ of the (unique) multilinear polynomial $p$ with $p(x) = f(x)$ for all $x \in \{0,1\}^n$.
  In particular, $\lc{\M}(f) \leq 2^n - n - 1$ holds.
\end{proposition}

\begin{proof}
  For a function $f : \{0,1\}^n \to \R$ the mentioned polynomial can be defined as
  \begin{equation*}
    p(x) \coloneqq \sum_{X \subseteq \{1,2,\dots,n\}} f(\chi(X)) \cdot \prod_{i \in X} x_i \cdot \prod_{i \notin X} (1-x_i),
  \end{equation*}
  where $\chi(X) \in \{0,1\}^n$ denotes the characteristic vector of $X$, defined via $\chi(X)_i = 1 \iff i \in X$.
  Note that $p$ has degree at most $n$ and hence is the weighted sum of $\transpose{a}x + \beta$ (for suitable $a \in \R^n$ and $\beta \in \R$) and at most $2^n-n-1$ monomials in $\M$ of degree greater than $1$.
  By construction, we have $f(x) = p(x)$ for each $x \in \{0,1\}^n$.
  This establishes the upper bound on $\lc{\M}(f)$, and equality follows from the uniqueness of $p$'s representation.
\end{proof}

\Cref{thm_lc_monomials} yields a sufficient condition, namely $\G \supseteq \M$, for $\lc{\G}(f)$ to be finite for all functions $f$.
The following proposition provides a characterization.
\begin{proposition}
  \label{thm_lc_finite}
  Let $\G_n$ be a subset of functions that map from $\{0,1\}^n \to \R$.
  Then $\lc{\G_n}(f) < \infty$ holds for all $f : \{0,1\}^n \to \R$ if and only if $\lc{\G_n}(g_{I,\varnothing}) < \infty$ holds for all $I \subseteq \{1,2,\dotsc,n\}$.
\end{proposition}
\begin{proof}
  Necessity is trivial.
  Sufficiency follows by expressing $f$ as a weighted sum of monomials and combining the monomials' linearizations accordingly.
\end{proof}

\paragraph{Random polynomials.}
Our first larger result essentially states that having a small linearization complexity is an exceptional property in a probabilistic sense.

\begin{theorem}
  \label{thm_null_set}
  Consider a family $\G$ of functions with $\M \subseteq \G \subseteq \B$.
  Then the subset of functions $f : \{0,1\}^n \to \R$ with $\lc{\G}(f) < 2^n - n - 1$ is a null set (in the set of pseudo-Boolean functions $f : \{0,1\}^n \to \R$).
\end{theorem}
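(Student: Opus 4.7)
The plan is to identify the space of pseudo-Boolean functions $f : \{0,1\}^n \to \R$ with $\R^{2^n}$ via the evaluation map $f \mapsto (f(x))_{x \in \{0,1\}^n}$, so that ``null set'' means Lebesgue null set in $\R^{2^n}$. Then I would express the set in question as a finite union of proper linear subspaces and conclude.

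More precisely, since $\G \subseteq \B$ and $\B$ is finite (there are $2^{2^n}$ Boolean functions on $\{0,1\}^n$), the family $\G$ itself is finite. For any fixed integer $k \geq 0$ and any fixed choice of functions $g_1,\dotsc,g_k \in \G$, the set of pseudo-Boolean functions admitting a representation
\[
  f(x) = \transpose{a}x + \beta + \sum_{i=1}^{k} b_i g_i(x)
\]
is the image in $\R^{2^n}$ of a linear map from $\R^{n+1+k}$ (with free parameters $a \in \R^n$, $\beta \in \R$, $b \in \R^k$), hence a linear subspace $U_{g_1,\dotsc,g_k} \subseteq \R^{2^n}$ of dimension at most $n+1+k$.

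If $k < 2^n - n - 1$, then $n + 1 + k < 2^n$, so $U_{g_1,\dotsc,g_k}$ is a proper linear subspace of $\R^{2^n}$ and therefore has Lebesgue measure zero. The set of functions with $\lc{\G}(f) < 2^n - n - 1$ is contained in
\[
  \bigcup_{k=0}^{2^n - n - 2} \;\bigcup_{(g_1,\dotsc,g_k) \in \G^k} U_{g_1,\dotsc,g_k},
\]
which is a finite union (since $\G$ is finite) of sets of measure zero, and is therefore itself a null set.

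The only subtle point is the identification of the ambient space: one must verify that pseudo-Boolean functions $f : \{0,1\}^n \to \R$ are in bijection with $\R^{2^n}$ in a way that makes the notion of ``null set'' well-defined and independent of the identification (which it is, as any two such identifications differ by an invertible linear map, preserving the class of Lebesgue null sets). Beyond that, no obstacle arises; the argument is essentially a counting-dimensions observation combined with the finiteness of $\B$.
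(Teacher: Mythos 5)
Your proof is correct and rests on the same key insight as the paper's, namely that the pseudo-Boolean functions on $\{0,1\}^n$ form a vector space of dimension $2^n$ and that the functions with a small linearization lie in a finite union of proper linear subspaces, each of Lebesgue measure zero. The only cosmetic difference is that you exhibit these subspaces directly as images of the linear maps $(a,\beta,b) \mapsto \transpose{a}x + \beta + \sum_i b_i g_i(x)$ for each fixed tuple $(g_1,\dotsc,g_k) \in \G^k$ with $k < 2^n-n-1$, whereas the paper reaches the same conclusion by passing through the linear system~\eqref{eq_system} and its invertible $2^n$-by-$2^n$ subsystems; your route is arguably the more transparent of the two.
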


The statement and its proof are similar to Theorem~1 in \cite{AnthonyBCG17}, basically exploiting that pseudo-Boolean functions form a vector space of dimension $2^n$ and that any subspace spanned by fewer functions constitutes a null set.

\begin{proof}
  Consider a fixed number $n$ of variables and the subset $\G_n \subseteq \G$ of functions that map from $\{0,1\}^n$.
  We consider the system of equations
  \begin{equation}
    \transpose{\bar{x}} a + \beta + \sum_{g \in \G_n :\, g(\bar{x}) = 1} c_g = f(\bar{x}) \qquad \forall \bar{x} \in \{0,1\}^n \label{eq_system}
  \end{equation}
  in variables $a_i \in \R$ (for $i = 1,2,\dotsc,n$), $\beta \in \R$ and $c_g \in \R$ for all $g \in \G_n$.
  On the one hand, every solution $(a,\beta,c)$ yields a linearization by letting $g_1,g_2,\dotsc,g_k$ be those $g \in \G_n$ for which $c_g \neq 0$ and letting $b_i \coloneqq c_{g_i}$ for $i=1,2,\dotsc,k$.
  Note that its size $k$ is equal to the number of nonzero components of the $c$-vector.
  On the other hand, every linearization, say with $a \in \R^n$, $\beta \in \R$, $b \in \R^k$ and $g_1,g_2,\dotsc,g_k \in \G_n$, corresponds to a solution $(a,\beta,c)$, where $c_{g_i} \coloneqq b_i$ for $i=1,2,\dotsc,k$ and $c_g \coloneqq 0$ for all $g \in \G_n \setminus \{ g_1,g_2,\dotsc,g_k \}$.

  Hence, a function $f : \{0,1\}^n \to \R$ has $\lc{\G}(f) < 2^n - n - 1$ if and only if its right-hand side vector $f(\bar{x})_{\bar{x} \in \{0,1\}} \in \R^{2^n}$ in the system~\eqref{eq_system} lies in the span of the first $n+1$ columns the coefficient matrix of~\eqref{eq_system} and less than $2^n-n-1$ of the remaining columns.
  However, the set of such right-hand side vectors (for a fixed choice of columns) has dimension less than $2^n$ and thus constitutes a null set.
  This fact is not changed by considering any of the (finite) number of column subsets (consisting of less than $2^n$ columns).
  The one-to-one correspondence between functions $f : \{0,1\}^n$ and their right-hand sides~\eqref{eq_system} establishes the result.
\end{proof}

\begin{corollary}
  \label{thm_random_polynomial}
  Consider a family $\G$ of functions with $\M \subseteq \G \subseteq \B$.
  A polynomial $p : \{0,1\}^n \to \R$ whose coefficients are chosen randomly according to any absolutely continuous probability distribution has $\lc{\G}(f) = 2^n - n - 1$ with probability~$1$.
\end{corollary}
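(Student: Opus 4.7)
The plan is a direct reduction to \cref{thm_null_set}. First I would identify the space of pseudo-Boolean functions $f : \{0,1\}^n \to \R$ with $\R^{2^n}$ via the evaluation map $f \mapsto \bigl(f(\bar{x})\bigr)_{\bar{x} \in \{0,1\}^n}$. Since every pseudo-Boolean function has a unique multilinear polynomial representation, a polynomial $p$ is likewise encoded by its $2^n$ coefficients, one per monomial $\prod_{i \in X} x_i$ indexed by $X \subseteq [n]$. The coefficient-to-value map $\R^{2^n} \to \R^{2^n}$ is linear, and invertible because the monomials form a basis of the space of pseudo-Boolean functions (equivalently, the coefficient matrix is triangular under the subset-inclusion order, with ones on the diagonal).

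Next I would invoke \cref{thm_trivial_upper} together with \cref{thm_monotone}: the assumption $\M \subseteq \G$ gives $\lc{\G}(p) \leq \lc{\M}(p) \leq 2^n - n - 1$ for every $p$, so the conclusion $\lc{\G}(p) = 2^n - n - 1$ with probability $1$ is equivalent to the statement that $\lc{\G}(p) < 2^n - n - 1$ occurs with probability $0$. By \cref{thm_null_set}, the set $N \subseteq \R^{2^n}$ of value-vectors with $\lc{\G}(\cdot) < 2^n - n - 1$ is Lebesgue null. Pulling $N$ back through the invertible linear coefficient-to-value map yields a Lebesgue null set in coefficient space, and any absolutely continuous probability distribution (with respect to Lebesgue measure on $\R^{2^n}$) assigns probability $0$ to such a set.

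The only step that requires some care, and arguably the main obstacle, is making the measure-theoretic transfer precise: one must be explicit that ``absolutely continuous'' refers to Lebesgue measure on the coefficient space and then use that an invertible linear map has absolutely continuous push-forward (the Jacobian is a nonzero constant), so that Lebesgue null sets pull back to Lebesgue null sets. Everything else is a one-line consequence of \cref{thm_null_set}.
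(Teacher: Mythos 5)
Your argument is correct and is exactly the deduction the paper intends: the corollary is stated without proof as an immediate consequence of \cref{thm_null_set}, and your reduction via the invertible linear coefficient-to-value map (plus the upper bound from \cref{thm_monotone} and \cref{thm_trivial_upper}) supplies precisely the missing details. The measure-theoretic transfer you flag is handled correctly: invertible linear maps preserve Lebesgue null sets, and absolute continuity then gives probability zero.
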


\DeclareDocumentCommand\partialSumSet{}{\mathop{pss}}

\paragraph{Arbitrary functions.}
We already observed that $\lc{\B}(f) \leq 2^n - n - 1$ holds for any function $f : \{0,1\}^n \to \R$.
It turns out that the family $\B$ is so general that the corresponding linearization complexity only depends on the range of $\widetilde{f}$.
To make this precise, we define for a vector $w \in \R^k$ its \emph{partial sum set $\partialSumSet(w)$} as the set of partial sums $\sum_{i \in I} w_i$ over all subsets $I \subseteq \{1,2,\dotsc,k\}$.
We also allow for $k = 0$ and define $\partialSumSet(w) = \{ 0 \}$ in this case.

\begin{theorem}
  \label{thm_arbitrary}
  Let $f : \{0,1\}^n \to \R$ and let $Y \coloneqq \{ \widetilde{f}(x) : x \in \{0,1\}^n \}$ be the range of its nonlinear part.
  Then $\lc{\B}(f)$ is equal to the smallest dimension $k$ of a vector $w \in \R^k$ with $\partialSumSet(w) \supseteq Y$.
\end{theorem}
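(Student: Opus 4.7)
My plan is to prove the two inequalities $\lc{\B}(f) \leq k^{\star}$ and $\lc{\B}(f) \geq k^{\star}$ separately, where $k^{\star}$ denotes the minimum dimension described in the statement. By \cref{thm_nonlinear_part}, I may work with $\widetilde{f}$ throughout, noting that $\widetilde{f}(\zerovec) = \widetilde{f}(\unitvec{j}) = 0$ for all $j$.

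For the upper bound $\lc{\B}(f) \leq k^{\star}$, given $w \in \R^{k^{\star}}$ with $\partialSumSet(w) \supseteq Y$, for each $x \in \{0,1\}^n$ I pick an index set $I(x) \subseteq \{1,\ldots,k^{\star}\}$ witnessing $\widetilde{f}(x) = \sum_{i \in I(x)} w_i$, and define $g_i(x) := 1$ if $i \in I(x)$ and $0$ otherwise. Each $g_i$ is Boolean, and the identity $\widetilde{f}(x) = \sum_{i=1}^{k^{\star}} w_i g_i(x)$ is a size-$k^{\star}$ binary linearization of $\widetilde{f}$ with vanishing affine part.

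For the lower bound $\lc{\B}(f) \geq k^{\star}$, I start with an optimal binary linearization $\widetilde{f}(x) = \transpose{a} x + \beta + \sum_{i=1}^k b_i g_i(x)$ of size $k = \lc{\B}(\widetilde{f})$ and aim to exhibit $w \in \R^k$ with $\partialSumSet(w) \supseteq Y$. I first normalize: by complementing any $g_i$ with $g_i(\zerovec) = 1$ (which simply flips the sign of $b_i$ and shifts $\beta$), I may assume $g_i(\zerovec) = 0$ for every $i$; together with $\widetilde{f}(\zerovec) = 0$, this forces $\beta = 0$. The identities $\widetilde{f}(\unitvec{j}) = 0$ then pin each $a_j$ down in terms of $\sum_i b_i g_i(\unitvec{j})$.

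The crux, as I see it, is this last leg: using these base-point constraints to absorb the residual linear term $\transpose{a} x$ into the Boolean sum without raising the function count, so that $w = b$ witnesses $\partialSumSet(w) \supseteq Y$. A naive substitution only yields $\widetilde{f}(x) = \sum_{i=1}^k b_i \widetilde{g_i}(x)$, where $\widetilde{g_i}$ is the nonlinear part of $g_i$---integer-valued but generally not Boolean---so the expression is not immediately a partial sum of $b$. Turning this representation into a genuine $\{0,1\}$-combination of the $b_i$'s, exploiting the combinatorial structure of $\widetilde{g_i}(x)$ as a signed sum of Boolean values at $x$, $\zerovec$ and the $\unitvec{j}$'s, is where the main work of the argument lies.
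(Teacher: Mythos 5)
The first half of your argument (the construction giving $\lc{\B}(f) \le k^{\star}$) is complete and is exactly the paper's construction. The problem is the second half: after the normalization you only reach $\widetilde{f}(x) = \sum_{i=1}^k b_i \widetilde{g}_i(x)$ with each $\widetilde{g}_i$ integer-valued but not Boolean, and you then stop, declaring the conversion of this into a genuine partial sum of $b$ to be ``where the main work lies.'' That work is never done, so $\lc{\B}(f) \ge k^{\star}$ is not established and the proposal is not a proof. Your normalization (complementing $g_i$ so that $g_i(\zerovec) = 0$) does not close the gap, because the obstruction comes from the values $g_i(\unitvec{j})$: the nonlinear part subtracts the linear interpolation through $\zerovec$ and the unit vectors, and for a Boolean $g_i$ this can leave $\widetilde{g}_i(x) \notin \{0,1\}$.

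That said, you have put your finger on precisely the step that the paper's own proof elides: the paper sets $I \coloneqq \{ i : \widetilde{g}_i(x) = 1\}$ and asserts $y = \sum_{i \in I} b_i$, which is valid only if every $\widetilde{g}_i(x)$ lies in $\{0,1\}$ --- exactly the assumption you (correctly) refuse to make. In fact the missing step cannot be supplied, because the stated equality fails. Take $f(x) = x_1 \vee x_2 \vee x_3$: this is a non-affine Boolean function, so $\lc{\B}(f) = 1$ (use $g_1 = f$); but $\widetilde{f}(x) = f(x) - x_1 - x_2 - x_3$ has range $Y = \{0,-1,-2\}$, and $\partialSumSet(w)$ for $w \in \R^1$ has at most two elements, so $k^{\star} = 2$. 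What the lower-bound argument actually yields is that the range of $f(x) - \transpose{a}x - \beta$, for the particular $a$ and $\beta$ occurring in an optimal linearization, is contained in $\partialSumSet(b)$; the correct characterization of $\lc{\B}(f)$ therefore minimizes $k^{\star}$ over all affine shifts of $f$, not just the one shift defining $\widetilde{f}$. So: your upper bound stands, your lower bound is incomplete, and the incompleteness reflects a genuine flaw in the theorem as stated rather than a failure of your argument.
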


\begin{proof}
  Let $f$ and $Y$ be as in the theorem and consider a vector $w \in \R^k$ (for some $k$) with $\partialSumSet(w) \supseteq Y$.
  We now construct a linearization of $f$ of size $k$.
  For each $x \in \{0,1\}^n$ we have $\widetilde{f}(x) \in Y \subseteq \partialSumSet(w)$, which implies that there must be a subset $I_x \subseteq \{1,2,\dotsc,k\}$ with $\sum_{i \in I_x} w_i = \widetilde{f}(x)$.
  We now define, for $i=1,2,\dotsc,k$, the function $g_i : \{0,1\}^n \to \{0,1\}$ such that
  \[
    g_i(x) = 1 \iff i \in I_x
  \]
  holds.
  Moreover, we use $b \coloneqq w$ and observe
  $
    \sum_{i=1}^k b_i g_i(x)
    = \sum_{i \in I_x} w_i
    = \widetilde{f}(x)
  $
  for each $x \in \{0,1\}^n$, which establishes $\lc{\B}(\widetilde{f}) \leq k$ and by \cref{thm_nonlinear_part} also $\lc{\B}(f) \leq k$.

  For the reverse direction, suppose $\lc{\B}(f) = k$.
  Again by \cref{thm_nonlinear_part} there exist vectors $a \in \R^n$, $b \in \R^k$, scalar $\beta \in \R$ and functions $g_1,g_2, \dotsc, g_k : \{0,1\}^n \to \{0,1\}$ such that
  \[
    f(x) = \transpose{a}x + \beta + \sum_{i=1}^k b_i g_i(x)
  \]
  holds for all $x \in \{0,1\}^n$.
  For the nonlinear part we obtain
  \[
    \widetilde{f}(x) = \sum_{i=1}^k b_i \widetilde{g}_i(x).
  \]
  Now consider a $y \in Y$, i.e., the function value $y = \widetilde{f}(x)$ for some $x \in \{0,1\}^n$.
  By using the set $I \coloneqq \{ i \in \{1,2,\dotsc,k\} : \widetilde{g}_i(x) = 1 \}$ we obtain that $y \in \partialSumSet(b)$ holds.
  This implies $\partialSumSet(b) \supseteq Y$, and hence the choice $w \coloneqq b$ concludes the proof.
\end{proof}

The intuition behind \cref{thm_arbitrary} is the following: since there are no restrictions on the structure of the functions $g$ used, the fact that the possible arguments $x$ of $f$ are actually (binary) vectors is not relevant, i.e., one can think of just $2^n$ different inputs without any structure.
In the next section we will use this viewpoint to construct a small linearization for an example application.

\section{Application to Low Autocorrelation Binary Sequences}
\label{sec_auto_correlation}

In this section we consider the low auto-correlation binary sequences problem, which arises in theoretical physics when studying ground states of the Bernasconi model~\cite{Bernasconi87,Golay82}.
An extension of the problem was considered in~\cite{LiersMPRS10}, which involves, in addition to the dimension $N$, another parameter $R \in \{1,2,\dotsc,N\}$ indicating the so-called \emph{interaction range} to specify a problem instance.
In the Bernasconi model, $R = N$ holds.
The general formulation is the following, where we omit normalization factors.
\begin{subequations}
  \label{model_bernasconi}
  \begin{alignat}{7}
    & \text{min } ~\mathrlap{ \sum_{i=0}^{N-R} \sum_{d=1}^{R-1} \left( \sum_{j=i+1}^{i+R-d} s_j s_{j+d} \right)^2 } \label{model_bernasconi_obj} \\
    & \text{s.t. }
      & s_i \in \{-1,+1\} &\quad& i = 1,2,\dotsc,N \label{model_bernasconi_domain}
  \end{alignat}
\end{subequations}
The state-of-the-art method for solving the Bernasconi problem ($R = N$) is a highly parallelized combinatorial branch-and-bound algorithm~\cite{PackebuschM16} which builds upon earlier work~\cite{Mertens96,MertensB98}.
The optima for the problem are known for all $N \leq 66$.
A dynamic programming algorithm for unconstrained binary polynomial optimization was recently proposed and tested computationally~\cite{ClausenCLRR24}.
Moreover, also methods based on quadratic convex reformulations were applied successfully to the problem~\cite{BillionnetE07,ElloumiLL21}.
For the Bernasconi model, they can find optimal solutions up to $N = R = 30$, but perform better for smaller interaction range, e.g., obtaining an optimal solution for $(N,R) = (40,10)$.

In this section we investigate IP models based on binary linearizations for this problem.
To this end, let us first rephrase the problem by substituting $s_i \in \{-1,+1\}$ with $2x_i - 1$ for $x_i \in \{0,1\}$ for all $i \in \{1,2,\dotsc,N\}$.
We obtain
\begin{subequations}
  \label{minlp_bernasconi}
  \begin{alignat}{7}
    & \text{min } ~\mathrlap{ f^{\text{bern}}_{N,R}(x) \coloneqq \sum_{i=0}^{N-R} \sum_{d=1}^{R-1} \left( \sum_{j=i+1}^{i+R-d} (4x_j x_{j+d} - 2x_j - 2x_{j+d} + 1) \right)^2 } \label{minlp_bernasconi_obj} \\
    & \text{s.t. }
      & x_i \in \{0,1\} &\quad& i = 1,2,\dotsc,N. \qquad\qquad\qquad\qquad\qquad\qquad\qquad \label{minlp_bernasconi_domain}
  \end{alignat}
\end{subequations}
Let $f^{\text{bern}}_{N,R}(x) = \sum_{m=1}^t a_m g_{I_m,\varnothing}(x)$ be the decomposition of $f^{\text{bern}}_{N,R}$ into its $t$ monomials.
With this notation, the standard IP reformulation~\eqref{eq_complement_linearization} applied to $f^{\text{bern}}_{N,R}$ reads as follows.
\begin{subequations}
  \label{ip_bernasconi_standard}
  \begin{alignat}{7}
    & \text{min } ~\mathrlap{ \sum_{m=1}^t a_m z_{I_m} } \\
    & \text{s.t. }
      & z_{I_m} &\leq x_i &\quad& \forall i \in I_m,~ m =1,2,\dotsc,t \\
    & & 1-z_{I_m} &\leq \sum_{i \in I_m} (1-x_i) &\quad& m=1,2,\dotsc,t \\
    & & z_{I_m} &\in \{0,1\} &\quad& m=1,2,\dotsc,t \\
    & & x_i  &\in \{0,1\} &\quad& i=1,2,\dotsc,N
  \end{alignat}
\end{subequations}
The following proposition establishes the asymptotic number of monomials of $f^{\text{bern}}_{N,R}(x)$, which implies that~\eqref{ip_bernasconi_standard} has $\orderTheta(N^3)$ many variables and constraints.

\begin{proposition}
  \label{thm_labs_monomials_cubic}
  For $N \geq 3$ and $1 \leq R \leq N$, the number of monomials of $f^{\text{bern}}_{N,R}(x)$ and thus the number of variables and constraints of model~\eqref{ip_bernasconi_standard} is $\orderTheta(NR^2)$.
\end{proposition}

\begin{proof}
  We first establish the upper bound.
  For that it suffices to consider the polynomial~\eqref{model_bernasconi_obj} in the $s$-variables since substitution of $s_i$ by $2x_i-1$ yields, due to the degree being at most $4$, a blow-up factor of $16$ in the number of monomials.
  All monomials with a nonzero coefficient are of the form $s_j s_{j+d} s_{j'} s_{j'+d}$.
  Since there are only $\orderO(N)$ choices for $j$ and, after chosing $j$, only $\orderO(R)$ choices for $j'$ and for $d$, there are only $\orderO(NR^2)$ monomials with a non-zero coefficient.

  It remains to prove the lower bound for which we analyze the polynomial~\eqref{minlp_bernasconi_obj} in the $x$-variables to avoid a discussion of potential cancellations caused by the substitution.
  To this end, let $k \coloneqq \lfloor R/4 \rfloor$.
  We consider any $d \in \{1,2,\dotsc,k\}$, any $j' \in \{k+1,k+2,\dotsc,2k\}$ and any $i \in \{1,\dotsc,\lfloor N/4 \rfloor\}$.
  Note that for each such $(d,j',i)$, the term $x_i x_{i+d} x_{i+j'} x_{i+j'+d}$ arises as a monomial of $f^{\text{bern}}_{N,R}$.
  Moreover, all these monomials are distinct and only appear with positive coefficients.
  Hence, due to $k \in \orderOmega(R)$ there exist $\orderTheta(N R^2)$ such monomials with a positive coefficient.
\end{proof}

\paragraph{Partial sum sets.}
With \cref{thm_labs_monomials_cubic} in mind a natural follow-up question is whether there exist formulations of smaller asymptotic size.
By \cref{thm_arbitrary} we can investigate the partial sum set of the domain $Y$ of its nonlinear part.
The set $Y$ is nontrivial to characterize, in particular since computing $\min(Y)$ is equivalent to solving~\eqref{model_bernasconi}.
Nevertheless, we can construct a suitable partial sum set of quadratic size which yields our main theoretical result for the application.

\begin{theorem}
  \label{thm_bernasconi_lc}
  For $N \geq 3$ and $1 \leq R \leq N$ we have $\lc{\B}(f^{\text{bern}}_{N,R}) \leq (N-R+1) R^2$.
\end{theorem}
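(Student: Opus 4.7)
The plan is to invoke Theorem~\ref{thm_arbitrary}: bounding $\lc{\B}(f^{\text{bern}}_N)$ reduces to exhibiting a vector $w \in \R^k$ with $k \leq N^2$ whose partial sum set contains every value of the nonlinear part $\widetilde{f^{\text{bern}}_N}$. Rather than analyze $\widetilde{f^{\text{bern}}_N}$ directly, which would force me to track the affine shift between $f^{\text{bern}}_N$ and its nonlinear part, I would build an explicit Boolean linearization of $f^{\text{bern}}_N$ itself of size $k \leq N^2$. By the definition of $\lc{\B}$, any such linearization immediately yields the desired inequality.

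The construction exploits the block structure of the objective, one block per lag. For each $d \in \{1,2,\dotsc,N-1\}$, the inner signed sum $\sigma_d(s) \coloneqq \sum_{i=1}^{N-d} s_i s_{i+d}$ is a sum of $N-d$ terms from $\{-1,+1\}$, so it lies in the arithmetic progression $\{-(N-d),-(N-d)+2,\dotsc,N-d\}$ of cardinality $N+1-d$. Its square therefore takes values in a set $V_d$ with $|V_d| \leq N+1-d$. For each pair $(d,v)$ with $v \in V_d$ I would introduce a Boolean indicator $g_{d,v} : \{0,1\}^N \to \{0,1\}$ defined by $g_{d,v}(x) = 1$ if and only if $\sigma_d(2x-\mathbbm{1})^2 = v$.

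Since for every $x$ and every $d$ exactly one of the indicators $g_{d,\cdot}(x)$ equals $1$, these functions assemble into
\[
  f^{\text{bern}}_N(x) \;=\; \sum_{d=1}^{N-1} \sigma_d(2x-\mathbbm{1})^2 \;=\; \sum_{d=1}^{N-1} \sum_{v \in V_d} v \cdot g_{d,v}(x),
\]
which is a linearization with zero linear and zero constant part. Its size is
\[
  \sum_{d=1}^{N-1} |V_d| \;\leq\; \sum_{d=1}^{N-1} (N+1-d) \;=\; \tfrac{N(N+1)}{2} - 1 \;\leq\; N^2,
\]
where the last inequality holds for every $N \geq 2$, completing the bound.

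I do not anticipate any genuine obstacle: the argument is a direct application of Theorem~\ref{thm_arbitrary} via the block-diagonal partial sum structure visible in~\eqref{minlp_bernasconi_obj}. The only mild subtlety is keeping the $\pm 1$-level counting (where the $N+1-d$ estimate is transparent) consistent with the $\{0,1\}$-level definition of $\lc{\B}$; the substitution $s = 2x - \mathbbm{1}$ makes each $g_{d,v}$ a well-defined Boolean function of $x$, so nothing is lost. As a secondary remark, $|V_d|$ is in fact closer to $(N+1-d)/2$ whenever $N-d$ is even, so the same construction would yield a slightly sharper constant, but this refinement is unnecessary for the stated bound.
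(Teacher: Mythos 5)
Your proof is correct and follows essentially the same route as the paper: both arguments rest on the observation that each inner sum $\sum_{i=1}^{N-d} s_i s_{i+d}$ attains at most $N+1-d$ values, introduce per-lag value indicators, and sum $\sum_{d=1}^{N-1}(N+1-d) = N(N+1)/2 - 1 \leq N^2$. The only cosmetic difference is that you write down the indicator linearization explicitly (as the paper itself does afterwards in its value-indicator model) instead of phrasing the bound through the partial-sum-set characterization of \cref{thm_arbitrary}.
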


\begin{proof}
  First observe that the sum $\sum_{j=i+1}^{i+R-d} s_j s_{j+d}$ can attain $R-d+1$ different values, and these depend on $d$ (in addition to $R$).
  Consequently there are also only $R-d+1$ different squared such values.
  Let, for $i \in \{0,1,\dotsc,N-R\}$ and $d \in \{1,2,\dotsc,R-1 \}$, the vector $w^{i,d}$ consist of these squared values in an arbitrary order and let $w = (w^{0,1}, w^{0,2}, \dotsc, w^{0,R-1},w^{1,1},w^{1,2}, \dotsc, w^{N-R,R-1})$ be the concatenation of all these vectors.
  Hence, $w$ has dimension at most $(N-R+1) (R-1) R \leq (N-R+1) R^2$.
  Moreover, for each $x \in \{0,1\}^n$, the value $f^{\text{bern}}_{N,R}(x)$ is a partial sum of elements of $w$.
  The result now follows from \cref{thm_arbitrary}.
\end{proof}

\DeclareDocumentCommand\domainID{}{\mathcal{ID}}

Note that the theorem only yields an asymptotic improvement over \cref{thm_labs_monomials_cubic} if $R$ is close to $N$.
In particular for the Bernasconi model, we improve from $\orderTheta(N^3)$ to $N^2$.

\paragraph{Model with value indicators.}
\Cref{thm_bernasconi_lc} only yields the existence of an IP formulation with $(N-R+1)R^2$ auxiliary variables.
Instead of following its proof to derive such a formulation we directly exploit the main observation that led to the quadratic size, namely that every element from the domain is a sum of squares and that, for every $(i,d) \in \domainID \coloneqq \{ 0,1,\dotsc,N-R \} \times \{ 0,1,\dotsc,R-1 \}$, at most $R-d+1$ such squared numbers can actually arise.
This yields the idea of introducing binary variables for indicating (for each pair $(i,d) \in \domainID$ separately) which number was actually squared.
To this end, we define the set $L_{i,d} \coloneqq \left\{ \sum\limits_{j=i+1}^{i+R-d} s_j s_{j+d} : s \in \{-1,1\}^N \right\}$ of distinct function values of the inner sum of~\eqref{model_bernasconi_obj}.
Since $s_j \cdot s_{j+d} \in \{-1,+1\}$ holds, we have
\[
  L_{i,d} \subseteq \{ -(R-d), -(R-d) + 2, \dotsc, (R-d)-2, (R-d) \}.
\]
We obtain $|L_{i,d}| \leq R+1-d$ for all $(i,d) \in \domainID$.
Now we consider, for each such pair $(i,d)$ and each $\ell \in L_{i,d}$, the function $g_{i,d,\ell} : \{0,1\}^N \to \{0,1\}$ that indicates whether the inner sum of~\eqref{model_bernasconi_obj} for some $(i,d)$ is equal to $\ell$.
More precisely, it shall satisfy $g_{i,d,\ell}(x) = 1$ if and only if $\sum\limits_{j=i+1}^{i+R-d} (2x_j-1)(2x_{j+d}-1) = \ell$.
Since for each $(i,d)$ and each $x \in \{0,1\}^n$, precisely one of the functions $g_{i,d,\ell}(x)$ is equal to $1$, we obtain
\[
  f^{\text{bern}}_{N,R}(x) = \sum_{(i,d) \in \domainID} \sum_{\ell \in L_{i,d}} \ell^2 g_{i,d,\ell}(x).
\]
This clearly leads to the following \emph{value-indicator-nogood (VING)} formulation with only $\orderO((N-R) R^2)$ variables by enforcing $z_{i,d,\ell} = g_{i,d,\ell}(x)$ via constraints~\eqref{eq_no_good}.
However, these are exponentially many constraints.
\begin{subequations}
  \label{ip_only_indicators}
  \begin{alignat}{7}
    & \text{min } ~\mathrlap{ \sum_{(i,d) \in \domainID} \sum_{\ell \in L_{i,d}} \ell^2 z_{i,d,\ell} } \label{ip_only_indicators_obj} \\
    & \text{s.t. }
      & \sum_{j : \bar{x}_j = 0} x_j &\;+\;& \sum_{j : \bar{x}_j = 1} (1-x_j) &\;+\; z_{i,d,\ell} &\;\geq 1 &\quad \forall (i,d) \in \domainID,~ \forall \ell \in L_{i,d}, \nonumber \\[-2ex]
    & & & & & & &\quad \qquad \forall \bar{x} \; \in \{0,1\}^n : g_{i,d,\ell}(\bar{x}) = 1 \label{ip_only_indicators_nogood1} \\[1ex]
    & & \sum_{j : \bar{x}_j = 0} x_j &\;+\;& \sum_{j : \bar{x}_j = 1} (1-x_j) &\;+\; (1-z_{i,d,\ell}) &\;\geq 1 &\quad \forall (i,d) \in \domainID,~ \forall \ell \in L_{i,d}, \nonumber \\[-2ex]
    & & & & & & &\quad \qquad \forall \bar{x} \; \in \{0,1\}^n : g_{i,d,\ell}(\bar{x}) = 0 \label{ip_only_indicators_nogood2} \\[1ex]
    & & & & x_i &\in \{0,1\} & &\quad \forall i \in \{1,2,\dotsc,N\} \label{ip_only_indicators_domain_x}\\
    & & & & z_{i,d,\ell} &\in \{0,1\} & &\quad \forall (i,d) \in \domainID,~ \forall \ell \in L_{i,d} \label{ip_only_indicators_domain_z}
  \end{alignat}
\end{subequations}

We can build upon this formulation idea in order to devise a hybrid formulation, called \emph{value-indicator-quadratic (VIQ)}:
Instead of relating the $z$-variables directly to the $x$-variables we introduce variables $y_{i,j}$ that indicate for all $i,j$ (with $i \neq j$) whether the product $s_i s_j$ is $+1$ or $-1$, and link these to the $z$-variables.

\begin{subequations}
  \label{ip_indicators}
  \begin{alignat}{7}
    & \text{min } ~\mathrlap{\sum_{(i,d) \in \domainID} \sum_{\ell \in L_{i,d}} \ell^2 z_{i,d,\ell} } \label{ip_indicators_obj} \\
    & \text{s.t. }
      & x_i + x_j &\geq 1 - y_{i,j} &\quad& \forall i,j \in \{1,2,\dotsc,N\} : i < j < i + R \label{ip_indicators_quad1} \\
    & & x_i - x_j &\geq y_{i,j} - 1 &\quad& \forall i,j \in \{1,2,\dotsc,N\} : i < j < i + R \label{ip_indicators_quad2}\\
    & & -x_i + x_j &\geq y_{i,j} - 1 &\quad& \forall i,j \in \{1,2,\dotsc,N\} : i < j < i + R \label{ip_indicators_quad3}\\
    & & x_i + x_j &\leq y_{i,j} + 1 &\quad& \forall i,j \in \{1,2,\dotsc,N\} : i < j < i + R \label{ip_indicators_quad4}\\
    & & \sum_{\ell \in L_{i,d}} z_{i,d,\ell} &=1 &\quad& \forall (i,d) \in \domainID \label{ip_indicators_convexity} \\
    & & \sum_{j=i+1}^{i+R-d} (2y_{j,j+d}-1) &=\sum_{\ell \in L_{i,d}} \ell z_{i,d,\ell} &\quad& \forall (i,d) \in \domainID \label{ip_indicators_indicate} \\
    & & x_i &\in \{0,1\} &\quad& i=1,2,\dotsc,N \label{ip_indicators_domain_x}\\
    & & y_{i,j} &\in \{0,1\} &\quad& \forall i,j \in \{1,2,\dotsc,N\} : i < j < i + R \label{ip_indicators_domain_y}\\
    & & z_{i,d,\ell} &\in \{0,1\} &\quad& \forall (i,d) \in \domainID,~ \forall \ell \in L_{i,d} \label{ip_indicators_domain_z}
  \end{alignat}
\end{subequations}

\begin{corollary}
  \label{thm_indicators_ip}
  For $N \geq 3$ and $1 \leq R \leq N$, the integer program~\eqref{ip_indicators} correctly models the low auto-correlation binary sequences problem~\eqref{model_bernasconi} and has $\orderO((N-R)R^2)$ variables and constraints.
\end{corollary}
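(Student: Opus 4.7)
My plan is to split the argument into a counting part and a correctness part, and then to verify correctness by showing that the $y$-variables encode the signed products $s_is_j$ and that the $z$-variables encode, for each $d$, which value the inner sum of \eqref{model_bernasconi_obj} attains.

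For the size bound, I would simply tally the variables and constraint families. There are $N$ variables of type $x$, $\binom{N}{2}=\orderO[N^2]$ variables of type $y$, and, since $|L_d| \leq N+1-d$, the total number of $z$-variables is at most $\sum_{d=1}^{N-1}(N+1-d) \in \orderO[N^2]$. Each of the constraint families \eqref{ip_indicators_quad1}--\eqref{ip_indicators_quad4} contributes $\binom{N}{2}$ constraints, whereas \eqref{ip_indicators_convexity} and \eqref{ip_indicators_indicate} contribute $N-1$ each, yielding $\orderO[N^2]$ constraints in total.

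For correctness, the key step is a case analysis on $(x_i,x_j)\in\{0,1\}^2$ for constraints~\eqref{ip_indicators_quad1}--\eqref{ip_indicators_quad4}: these four inequalities together force $y_{i,j}=1$ when $x_i=x_j$ (\eqref{ip_indicators_quad1} for the $(0,0)$-case and \eqref{ip_indicators_quad4} for the $(1,1)$-case) and $y_{i,j}=0$ when $x_i\neq x_j$ (\eqref{ip_indicators_quad2} and \eqref{ip_indicators_quad3}). Equivalently, $2y_{i,j}-1 = (2x_i-1)(2x_j-1) = s_is_j$. Consequently, $\sum_{i=1}^{N-d}(2y_{i,i+d}-1) = \sum_{i=1}^{N-d} s_is_{i+d}$, which is exactly the inner sum appearing in \eqref{model_bernasconi_obj}.

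It then suffices to read off from~\eqref{ip_indicators_convexity} that, for each $d$, exactly one $z_{d,\ell}$ equals $1$, so~\eqref{ip_indicators_indicate} collapses to the identity $\sum_{i=1}^{N-d} s_is_{i+d} = \ell^{\ast}(d)$ where $\ell^{\ast}(d)\in L_d$ is the selected value. Since $\ell^{\ast}(d)\in L_d$ is in fact any integer of the correct parity in $[-(N-d),N-d]$ that equals this inner sum, and by definition of $L_d$ the required value always lies in $L_d$, the $z$-variables can (and must) realize it. Substituting into the objective~\eqref{ip_indicators_obj} yields $\sum_{d=1}^{N-1}\ell^{\ast}(d)^2 = f^{\text{bern}}_N(x)$, proving correctness. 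The only slightly delicate step is the four-inequality case analysis for the $y$-variables; the rest is an unpacking of definitions.
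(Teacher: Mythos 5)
Your proposal is correct and follows essentially the same route as the paper's proof: constraints \eqref{ip_indicators_quad1}--\eqref{ip_indicators_quad4} enforce $2y_{i,j}-1 = s_is_j$, constraint \eqref{ip_indicators_convexity} selects exactly one $\ell$ per $d$, and constraint \eqref{ip_indicators_indicate} forces that $\ell$ to be the value of the inner sum, so the objective equals $f^{\text{bern}}_N(x)$. Your explicit case analysis for the $y$-constraints, the remark that the required value always lies in $L_d$ (feasibility), and the variable/constraint tally are all details the paper leaves implicit, but they do not change the argument.
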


\begin{proof}
  Similar to~\eqref{minlp_bernasconi} we model $s_i \in \{-1,+1\}$ by $s_i = 2x_i - 1$, i.e., $x_i = 1$ if and only if $s_i = 1$.
  Constraints~\eqref{ip_indicators_quad1}--\eqref{ip_indicators_quad4} enforce that $2y_{i,j} - 1 = s_i \cdot s_j = (2x_i-1)(2x_j-1)$ holds, i.e., $y_{i,j} = 1$ if and only if $x_i = x_j$ holds.
  For each $d \in \{1,2,\dotsc,N-1\}$, equation~\eqref{ip_indicators_convexity} implies that $z_{i,d,\ell} = 1$ holds for exactly one $\ell \in L_{i,d}$.
  Since the left-hand side of~\eqref{ip_indicators_indicate} is equal to the inner sum of~\eqref{model_bernasconi_obj}, the right-hand side implies that $z_{i,d,\ell} = 1$ holds exactly for $\ell$ being the value of that inner sum.
  Hence, the right-hand side of~\eqref{ip_indicators_indicate} is the corresponding $\ell \in L_{i,d}$.
  This implies that the contribution of all $z_{i,d,\ell}$ for some $(i,d) \in \domainID$ to the objective~\eqref{ip_indicators_obj} is equal to $\ell^2$, where $\ell$ is the value of the inner sum of~\eqref{model_bernasconi_obj}.
  We conclude that~\eqref{model_bernasconi_obj} is indeed equal to $f^{\text{bern}}_{N,R}(x)$.

  The numbers of variables and constraints are easily verified.
\end{proof}

\paragraph{Experimental evaluation.}
The numbers of variables and constraints are only one factor that has impact on the solution time of an IP.
For instance, the quality of the dual (in our case lower) bounds on the optimum and the ability to find good primal solutions is extremely important.
Hence, we compare the new IP models~\eqref{ip_indicators} and~\eqref{ip_only_indicators} with the standard formulation~\eqref{ip_bernasconi_standard}.

We tried to solve the low auto-correlation binary sequences problem with both models using the SCIP solver framework~\cite{SCIP9}\footnote{The code can be found at \href{https://github.com/discopt/labs}{github.com/discopt/labs}.}.
Constraints~\eqref{ip_only_indicators_nogood1} and~\eqref{ip_only_indicators_nogood2} were implemented as a constraint handler that finds violated inequalities in linear time.
Note that all three approaches are too slow to compete with the most recent state-of-the-art approach that is based on combinatorial branch-and-bound with which the problem could be solved to optimality up to $N = 66$~\cite{PackebuschM16}.
We report about instances with $N \in \{5,10,15,20,25,30,35\}$ and $R \in \{N, \lfloor \tfrac{3}{4}N \rceil, \lfloor \tfrac{1}{2}N \rceil, \lfloor \tfrac{1}{4}N \rceil, \lfloor \tfrac{1}{8}N \rceil \}$ using version~9.0.0 of SCIP using SoPlex~7.0.0 as an LP solver.
% 248 cores * 55d * 24h  = 327360
Also note that we ran our experiments only on a single core (with an Intel CPU on \SI{2.1}{\giga\hertz}) while the computations for $N=66$ were done using 248 cores and took about 55 days~\cite{PackebuschM16}.
Moreover, we strengthened none of the models to keep the comparison fair.

\Cref{table_bernasconi_ip_sizes} provides an overview over the number of variables and constraints for the models, while \cref{table_bernasconi_ip_solved} shows the actual results.
Note that for $R = N \geq 25$, SCIP could not solve the standard formulation~\eqref{ip_bernasconi_standard} within three hours.
In fact, the lower bound obtained after that time was still a negative number.
The first model~\eqref{ip_only_indicators} with value indicators always produces nonnegative bounds by construction, but it usually takes a lot of branching effort until this increases.
A likely reason is that, besides providing a nonnegative bound, the LP relaxation is not very tight, and hence the solver has to effectively solve the problem by branching.
The same holds for the hybrid model~\eqref{ip_indicators} but, as can be seen in \cref{table_bernasconi_ip_solved} it requires much less branching.
We conclude that our last formulation~\eqref{ip_indicators} outperforms the standard formulation as well as the straight-forward formulation~\eqref{ip_only_indicators}, in particular by means of quality of the bound obtained from the LP relaxation.

Finally, let us remark that neither model outperforms the mixed-integer quadratic programming approach described in~\cite{ElloumiLL21} or the dynamic programming algorithm from~\cite{ClausenCLRR24}, although a direct comparison of the bounds with those from several other papers is difficult due to the fact that the benchmark instance files in~\cite{MINLPLib20} used there do not contain the correct absolute term in the polynomial.
However, we believe that it is possible to make formulation~\eqref{ip_indicators} competitive by strengthening the model using cutting planes that improve the linking of $y$- and $z$-variables.

\DeclareDocumentCommand\timelimit{}{\faClockO}

\begin{table}[htpb]
  \caption{%
    Comparison of IP models~\eqref{ip_bernasconi_standard}, \eqref{ip_only_indicators} and~\eqref{ip_indicators} with respect to model size.
    The first two columns show the parameters $N$ and $R$.
    Moreover, for each model, the number of variables (Vars) and constraints (Cons) are shown, where for~\eqref{ip_only_indicators} the number indicates the number of constraints that were generated while solving the IP with a time limit of \SI{3}{\hour} (if reached, this is indicated with ``$>$'').
  }
  \label{table_bernasconi_ip_sizes}
  \setlength{\tabcolsep}{1.5mm}
  \begin{center}
    \begin{tabular}{rr|rr|rr|rr|rr}
      \multicolumn{2}{c|}{\textbf{Instance}}
      & \multicolumn{2}{c|}{\textbf{Standard~(\ref{ip_bernasconi_standard})}}
      & \multicolumn{2}{c|}{\textbf{VING~(\ref{ip_only_indicators})}}
      & \multicolumn{2}{c}{\textbf{VIQ~(\ref{ip_indicators})}} \\
      \textbf{N} & \textbf{R}
      & \textbf{Vars} & \textbf{Cons}
      & \textbf{Vars} & \textbf{Cons} 
      & \textbf{Vars} & \textbf{Cons} \\
      \hline
      5 & 3  & \num{14} & \num{20}  & \num{20} & \num{87}   & \num{30} & \num{52} \\
      5 & 4  & \num{29} & \num{76}  & \num{23} & \num{198}   & \num{33} & \num{52} \\
      5 & 5  & \num{33} & \num{92}  & \num{19} & \num{200}   & \num{29} & \num{48} \\
      10 & 3  & \num{29} & \num{45}  & \num{50} & \num{519}   & \num{95} & \num{212} \\
      10 & 5  & \num{98} & \num{312}  & \num{94} & \num{7441}   & \num{139} & \num{228} \\
      10 & 8  & \num{197} & \num{723}  & \num{115} & \num{13902}   & \num{160} & \num{222} \\
      10 & 10  & \num{232} & \num{870}  & \num{64} & \num{12944}   & \num{109} & \num{198} \\
      15 & 2  & \num{16} & \num{1}  & \num{43} & \num{634}   & \num{148} & \num{448} \\
      15 & 4  & \num{119} & \num{356}  & \num{123} & \num{22808}   & \num{228} & \num{492} \\
      15 & 8  & \num{387} & \num{1463}  & \num{295} & \num{583033}   & \num{400} & \num{532} \\
      15 & 11  & \num{609} & \num{2398}  & \num{340} & \num{644103}   & \num{445} & \num{520} \\
      15 & 15  & \num{784} & \num{3141}  & \num{134} & \num{642542}   & \num{239} & \num{448} \\
      20 & 3  & \num{59} & \num{95}  & \num{110} & \num{8174}   & \num{300} & \num{832} \\
      20 & 5  & \num{228} & \num{752}  & \num{244} & \num{1276533}   & \num{434} & \num{888} \\
      20 & 10  & \num{854} & \num{3368}  & \num{614} & \num{18843315}   & \num{804} & \num{958} \\
      20 & 15  & \num{1515} & \num{6175}  & \num{734} & \num{27263821}   & \num{924} & \num{928} \\
      20 & 20  & \num{1880} & \num{7732}  & \num{229} & \num{27916846}   & \num{419} & \num{798} \\
      25 & 3  & \num{74} & \num{120}  & \num{140} & \num{21822}   & \num{440} & \num{1292} \\
      25 & 6  & \num{433} & \num{1552}  & \num{425} & \num{56896414}   & \num{725} & \num{1400} \\
      25 & 13  & \num{1808} & \num{7342}  & \num{1195} & \num{84930754}   & \num{1495} & \num{1512} \\
      25 & 19  & \num{3066} & \num{12708}  & \num{1348} & \num{93968461}   & \num{1648} & \num{1452} \\
      25 & 25  & \num{3703} & \num{15437}  & \num{349} & \num{402081122}   & \num{649} & \num{1248} \\
      30 & 4  & \num{254} & \num{776}  & \num{273} & \num{9603173}   & \num{708} & \num{1902} \\
      30 & 8  & \num{957} & \num{3683}  & \num{835} & \num{45645241}   & \num{1270} & \num{2062} \\
      30 & 15  & \num{2975} & \num{12235}  & \num{1934} & \num{116846391}   & \num{2369} & \num{2188} \\
      30 & 23  & \num{5407} & \num{22633}  & \num{2230} & \num{169575296}   & \num{2665} & \num{2092} \\
      30 & 30  & \num{6443} & \num{27079}  & \num{494} & \num{346309883}   & \num{929} & \num{1798} \\
      35 & 4  & \num{299} & \num{916}  & \num{323} & \num{46091169}   & \num{918} & \num{2572} \\
      35 & 9  & \num{1417} & \num{5557}  & \num{1223} & \num{66844682}   & \num{1818} & \num{2812} \\
      35 & 18  & \num{5038} & \num{20957}  & \num{3095} & \num{121894574}   & \num{3690} & \num{2992} \\
      35 & 26  & \num{8383} & \num{35287}  & \num{3535} & \num{144974813}   & \num{4130} & \num{2880} \\
      35 & 35  & \num{10288} & \num{43472}  & \num{664} & \num{327766744}   & \num{1259} & \num{2448}
    \end{tabular}
  \end{center}
\end{table}

\begin{landscape}

\begin{table}[htpb]
  \caption{%
    Comparison of IP models~\eqref{ip_bernasconi_standard}, \eqref{ip_only_indicators} and~\eqref{ip_indicators}.
    The first three columns show the parameters $N$ and $R$ as well as the best-known solution value.
    Moreover, for each model, the dual bound from the LP relaxation (LP), the dual bound after termination of branch-and-cut (IP), the number of solved branch-and-bound nodes (Nodes), as well as the total solution time (Time) are reported.
    The symbol \faClockO{} indicates that the computation reached the timeout of \SI{3}{\hour}.
  }
  \label{table_bernasconi_ip_solved}
  \setlength{\tabcolsep}{1.5mm}
  \begin{center}
    \begin{tabular}{rrr|rrrr|rrrr|rrrr}
      \multicolumn{3}{c|}{\textbf{Instance}}
      & \multicolumn{4}{c|}{\textbf{Standard~(\ref{ip_bernasconi_standard})}}
      & \multicolumn{4}{c|}{\textbf{VING~(\ref{ip_only_indicators})}}
      & \multicolumn{4}{c}{\textbf{VIQ~(\ref{ip_indicators})}} \\
      \textbf{N} & \textbf{R} & \textbf{Best}
      & \textbf{LP} & \textbf{IP} & \textbf{Nodes} & \textbf{Time}
      & \textbf{LP} & \textbf{IP} & \textbf{Nodes} & \textbf{Time}
      & \textbf{LP} & \textbf{IP} & \textbf{Nodes} & \textbf{Time} \\
      \hline
      5 & 3 & \num{3}  & \num{3} & \num{3} & \num{1} & \SI{0.8}{\second}   & \num{0} & \num{3} & \num{7} & \SI{0.4}{\second}   & \num{3} & \num{3} & \num{1} & \SI{0.4}{\second} \\
      5 & 4 & \num{4}  & \num{-148} & \num{4} & \num{1} & \SI{1.0}{\second}   & \num{0} & \num{4} & \num{15} & \SI{0.1}{\second}   & \num{4} & \num{4} & \num{1} & \SI{0.0}{\second} \\
      5 & 5 & \num{2}  & \num{-226} & \num{2} & \num{1} & \SI{1.1}{\second}   & \num{0} & \num{2} & \num{19} & \SI{0.1}{\second}   & \num{2} & \num{2} & \num{1} & \SI{0.0}{\second} \\
      10 & 3 & \num{8}  & \num{8} & \num{8} & \num{1} & \SI{1.0}{\second}   & \num{0} & \num{8} & \num{33} & \SI{0.0}{\second}   & \num{8} & \num{8} & \num{1} & \SI{0.0}{\second} \\
      10 & 5 & \num{24}  & \num{-1356} & \num{24} & \num{41} & \SI{1.5}{\second}   & \num{0} & \num{24} & \num{388} & \SI{0.4}{\second}   & \num{12} & \num{24} & \num{49} & \SI{0.5}{\second} \\
      10 & 8 & \num{28}  & \num{-5004} & \num{28} & \num{77} & \SI{2.4}{\second}   & \num{0} & \num{28} & \num{535} & \SI{0.6}{\second}   & \num{12} & \num{28} & \num{57} & \SI{0.7}{\second} \\
      10 & 10 & \num{13}  & \num{-3795} & \num{13} & \num{147} & \SI{3.5}{\second}   & \num{0} & \num{13} & \num{1016} & \SI{0.6}{\second}   & \num{5} & \num{13} & \num{1} & \SI{0.3}{\second} \\
      15 & 2 & \num{14}  & \num{14} & \num{14} & \num{0} & \SI{0.7}{\second}   & \num{0} & \num{14} & \num{155} & \SI{0.1}{\second}   & \num{14} & \num{14} & \num{1} & \SI{0.4}{\second} \\
      15 & 4 & \num{24}  & \num{-888} & \num{24} & \num{189} & \SI{1.6}{\second}   & \num{0} & \num{24} & \num{1783} & \SI{1.8}{\second}   & \num{24} & \num{24} & \num{2} & \SI{0.3}{\second} \\
      15 & 8 & \num{88}  & \num{-13344} & \num{88} & \num{1042} & \SI{11.5}{\second}   & \num{0} & \num{88} & \num{13404} & \SI{17.0}{\second}   & \num{32} & \num{88} & \num{118} & \SI{2.8}{\second} \\
      15 & 11 & \num{89}  & \num{-26575} & \num{89} & \num{1067} & \SI{19.3}{\second}   & \num{0} & \num{89} & \num{16118} & \SI{22.2}{\second}   & \num{25} & \num{89} & \num{642} & \SI{5.4}{\second} \\
      15 & 15 & \num{15}  & \num{-15421} & \num{15} & \num{2912} & \SI{96.3}{\second}   & \num{0} & \num{15} & \num{34796} & \SI{22.5}{\second}   & \num{7} & \num{15} & \num{1906} & \SI{11.0}{\second} \\
      20 & 3 & \num{18}  & \num{18} & \num{18} & \num{1} & \SI{3.4}{\second}   & \num{0} & \num{18} & \num{738} & \SI{0.6}{\second}   & \num{18} & \num{18} & \num{1} & \SI{0.1}{\second} \\
      20 & 5 & \num{64}  & \num{-3616} & \num{64} & \num{2484} & \SI{20.0}{\second}   & \num{0} & \num{64} & \num{52716} & \SI{60.2}{\second}   & \num{32} & \num{64} & \num{775} & \SI{6.9}{\second} \\
      20 & 10 & \num{199}  & \num{-41745} & \num{199} & \num{2463} & \SI{136.8}{\second}   & \num{0} & \num{199} & \num{291086} & \SI{1023.7}{\second}   & \num{55} & \num{199} & \num{1153} & \SI{24.9}{\second} \\
      20 & 15 & \num{170}  & \num{-92526} & \num{170} & \num{15029} & \SI{574.0}{\second}   & \num{0} & \num{170} & \num{383481} & \SI{1502.9}{\second}   & \num{42} & \num{170} & \num{10020} & \SI{205.3}{\second} \\
      20 & 20 & \num{26}  & \num{-39890} & \num{26} & \num{346254} & \SI{4096.1}{\second}   & \num{0} & \num{26} & \num{1100186} & \SI{847.1}{\second}   & \num{10} & \num{26} & \num{13454} & \SI{131.1}{\second} \\
      25 & 3 & \num{23}  & \num{23} & \num{23} & \num{1} & \SI{0.5}{\second}   & \num{0} & \num{23} & \num{2020} & \SI{4.0}{\second}   & \num{23} & \num{23} & \num{1} & \SI{0.3}{\second} \\
      25 & 6 & \num{140}  & \num{-10580} & \num{140} & \num{63868} & \SI{346.9}{\second}   & \num{0} & \num{140} & \num{1453552} & \SI{4113.0}{\second}   & \num{60} & \num{140} & \num{7834} & \SI{90.7}{\second} \\
      25 & 13 & \num{302}  & \num{-123422} & \num{302} & \num{414502} & \SI{4139.6}{\second}   & \num{0} & \num{0} & \num{608642} & \faClockO   & \num{78} & \num{302} & \num{4813} & \SI{176.2}{\second} \\
      25 & 19 & \num{335}  & \num{-236145} & \num{-8997} & \num{628159} & \faClockO   & \num{0} & \num{0} & \num{697217} & \faClockO   & \num{63} & \num{335} & \num{298622} & \SI{4231.8}{\second} \\
      25 & 25 & \num{36}  & \num{-81916} & \num{-9398} & \num{330251} & \faClockO   & \num{0} & \num{0} & \num{12052530} & \faClockO   & \num{12} & \num{36} & \num{140836} & \SI{818.7}{\second} \\
      30 & 4 & \num{54}  & \num{-1998} & \num{54} & \num{14945} & \SI{100.5}{\second}   & \num{0} & \num{54} & \num{762946} & \SI{1947.4}{\second}   & \num{54} & \num{54} & \num{5} & \SI{1.3}{\second} \\
      30 & 8 & \num{268}  & \num{-38364} & \num{268} & \num{1473990} & \SI{8256.5}{\second}   & \num{0} & \num{0} & \num{424215} & \faClockO   & \num{92} & \num{268} & \num{117667} & \SI{1162.5}{\second} \\
      30 & 15 & \num{496}  & \num{-246736} & \num{-12523} & \num{636725} & \faClockO   & \num{0} & \num{0} & \num{501220} & \faClockO   & \num{112} & \num{496} & \num{87424} & \SI{2941.5}{\second} \\
      30 & 23 & \num{544}  & \num{-501512} & \num{-75545} & \num{96704} & \faClockO   & \num{0} & \num{0} & \num{563463} & \faClockO   & \num{88} & \num{96} & \num{267257} & \faClockO \\
      30 & 30 & \num{59}  & \num{-146285} & \num{-57671} & \num{20221} & \faClockO   & \num{0} & \num{0} & \num{8282519} & \faClockO   & \num{15} & \num{29} & \num{1613292} & \faClockO \\
      35 & 4 & \num{64}  & \num{-2368} & \num{64} & \num{19874} & \SI{159.2}{\second}   & \num{0} & \num{33} & \num{2174070} & \faClockO   & \num{64} & \num{64} & \num{21} & \SI{3.9}{\second} \\
      35 & 9 & \num{400}  & \num{-69660} & \num{-3840} & \num{1281756} & \faClockO   & \num{0} & \num{0} & \num{405240} & \faClockO   & \num{108} & \num{400} & \num{689174} & \SI{8432.0}{\second} \\
      35 & 18 & \num{970}  & \num{-508734} & \num{-71146} & \num{108013} & \faClockO   & \num{0} & \num{0} & \num{512088} & \faClockO   & \num{162} & \num{213} & \num{173793} & \faClockO \\
      35 & 26 & \num{930}  & \num{-928590} & \num{-351158} & \num{18151} & \faClockO   & \num{0} & \num{0} & \num{576608} & \faClockO   & \num{130} & \num{130} & \num{80653} & \faClockO \\
      35 & 35 & \num{85}  & \num{-237711} & \num{-121870} & \num{16103} & \faClockO   & \num{0} & \num{0} & \num{6416944} & \faClockO   & \num{17} & \num{18} & \num{780767} & \faClockO
    \end{tabular}
  \end{center}
\end{table}

\end{landscape}

We also tried to determine a small formulation for problem~\eqref{model_bernasconi} for $\G = \C$, i.e., for the family of potentially complemented products of variables.
To this end, system~\eqref{eq_system} can be augmented by binary variables to indicate whether a $g \in \G$ is used (that is, has a nonzero multiplier).
The minimization of the sum of these binary variables yields $\lc{\G}(f)$.
Clearly, this approach only works for small sizes $|\G|$.
Our results for $N \in \{3,4,5,6\}$ and $R = N$ were quite disappointing -- at least when considering only functions $g \in \C$ of degree at most $5$, the linearization complexity is minimized by the linearization that just uses monomials.
Hence, we do not expect that one can gain much by allowing complements of variables for the low auto-correlation problem.

\section{Open Problems}
\label{sec_open_problems}

We hope to have convinced the reader that the linearization complexity is a useful concept.
Nevertheless, many unsolved problems remain, and we use the rest of the paper to present them.

\paragraph{More families.}
We discussed various families of functions $g$ to use for linearization, namely the products of variables $\M$, the potentially complemented products of variables $\C$, arbitrary Boolean functions $\B$, and finally functions that indicate whether an expression of a certain type attains a certain value.
Another family is induced by any quadratization strategy: after applying such a strategy to obtain a quadratization function $g$ as in~\eqref{eq_quadratization} one can linearize the latter, e.g., in a monomial-wise fashion.
We believe that there are more such interesting families.

\paragraph{Formulations.}
Perfect formulations are known for the first two considered linearizations when considering every linearization variables separately.
Strengthening the joint formulation for multiple linearization variables from $\C$ is subject of current research, e.g., by means of studying multilinear polytopes~\cite{BuchheimCR19,CramaR17,DelPiaD21,DelPiaK17,DelPiaK18,DelPiaK18a,DelPiaK21,DelPiaW22,Rodriguez-Heck18}.
For $\B$ we cannot hope to identify perfect formulations since this encompasses arbitrary binary sets.
However, for other functions it is interesting to investigate which inequalities are best to add in order to apply such a function.
For instance, our formulation~\eqref{ip_indicators} worked well because we did not model the meaning of each $z$-variable individually, but because we considered multiple of them in a combined fashion in constraints~\eqref{ip_indicators_convexity} and~\eqref{ip_indicators_indicate}.

\paragraph{Bounding techniques.}
While we could interpret the linearization complexities for $\M$ and for $\B$, little is known about $\C$.
The first natural question is which properties of a polynomial allow to recast it as a sum of only few products of potentially complemented variables.

\paragraph{Algorithmic questions.}
After settling how $f$ could be encoded (expressing it as a polynomial is only one possibility), there are many algorithmic problems related to linearization complexity.
Most importantly, the complexity of the computation or approximation of $\lc{\C}(f)$ or of $\lc{\B}(f)$ is open.
For practical purposes it would be very interesting to find small linearizations based on $\C$ because the actual formulations are essentially the same as those for $\M$ which are reasonably well understood.

\paragraph{Approximations.}
While \cref{thm_null_set} and \cref{thm_random_polynomial} indicate that linearizations with small linearization complexity are rare, one may still consider approximate linearizations, i.e., small linearizations of a function $f'$ that is very close to $f$.
Instead of abandoning exactness one can also try to pursue a related approach by finding a linearization that may not be small but for which only few of the weights $b_i$ in~\eqref{eq_linearization} are large (in absolute value).
While the resulting IP formulation would still have many variables, one could apply strengthening techniques only on those few linearizations that are most important for the value of $f$.
Then, relaxation errors for the remaining variables (with small $|b_i|$) will not have a big impact on the overall objective value, which would yield better bounds.

\paragraph{Acknowledgements.}
We thank three anonymous referees for their constructive comments that led to an improved presentation of the paper. In fact, some of the results would not have been obtained without their valuable feedback.

\bibliographystyle{plainurl}
\bibliography{binary-linearization-complexity}

\end{document}